\newtheorem{definition}{Definition}
\newtheorem{proposition}{Proposition}
\newtheorem{proof}{Proof}
\newtheorem{axiom}{Axiom}
\tikzstyle{every label}= [black]
\tikzstyle{place}=[circle,draw=black,minimum size=6mm, node distance=2cm]
\tikzstyle{transition}=[rectangle,draw=black, minimum size=6mm, node distance=2cm]
\tikzstyle{pre}=[<-,shorten <=1pt,>=stealth',semithick]
\tikzstyle{post}=[->,shorten >=1pt,>=stealth',semithick]
\tikzstyle{block} = [rectangle, draw, fill=gray!50, 
\tikzstyle{line} =  [draw, thick, ->, shorten >=2pt] 
\tikzstyle{cloud} = [draw, ellipse,fill=red!20, node distance=3cm,
\tikzstyle{print} = [draw, tape, tape bend top=none, fill=gray!50, node distance=3cm
\tikzstyle{decision}= [diamond, aspect=2, draw, fill=gray!50,
\begin{document}

\title{Shared value economics: an axiomatic approach}

\author[1]{Francisco Salas-Molina\footnote{Corresponding author. E-mail addresses: \textit{francisco.salas-molina@uv.es, jar@iiia.csic.es, filippo.bistafa@iiia.csic.es}}}
\author[2]{Juan A. Rodr\'iguez-Aguilar}
\author[2]{Filippo Bistaffa}

\affil[1]{Universitat de Val\`encia,  Av. Tarongers, 46022 Val\`encia, Spain}
\affil[2]{IIIA-CSIC, Campus UAB, 08913 Cerdanyola, Spain}

\bibliographystyle{apa}
\maketitle

\begin{abstract}
The concept of shared value was introduced by Porter and Kramer as a new conception of capitalism. Shared value describes the strategy of organizations that simultaneously enhance their competitiveness and the social conditions of related stakeholders such as employees, suppliers and the natural environment. The idea has generated strong interest, but also some controversy due to a lack of a precise definition, measurement techniques and difficulties to connect theory to practice. We overcome these drawbacks by proposing an economic framework based on three key aspects: coalition formation, sustainability and consistency, meaning that conclusions can be tested by means of logical deductions and empirical applications. The presence of multiple agents to create shared value and the optimization of both social and economic criteria in decision making represent the core of our quantitative definition of shared value. We also show how economic models can be characterized as shared value models by means of logical deductions. Summarizing, our proposal builds on the foundations of shared value to improve its understanding and to facilitate the suggestion of economic hypotheses, hence accommodating the concept of shared value within modern economic theory.
\\

\noindent
\textbf{Keywords}: quantitative economics; cooperative games; multiple criteria decision making; society and environment.
\\

\noindent
\textbf{JEL Codes}: D21, C71. 

\end{abstract}


\section{Introduction\label{sec:intro}}

Michael E. Porter and Mark R. Kramer proposed the concept of shared value (SV) to clearly identify policies that an organization implements to simultaneously increase its competitiveness and enhance the social conditions of the communities in which they operate \citep{porter2011big}. The authors used the concept of SV for the first time in \cite{porter2006strategy} although it was fully developed later in \cite{porter2011big}. The underlying hypothesis is simple: focusing on both the economic and social aspects of the activities developed by organizations will benefit them in the long term. Here, social should be understood in broad sense including environmental issues. 


The idea of SV has received the attention of the management and economics research community. However, SV has also created some controversia. \cite{crane2014contesting} acknowledge the increasing popularity of the term but they also state that the concept of SV suffers from some serious shortcomings. Among others, the authors pointed out lack of originality, ignorance of the tensions between social and economic goals, and a naive conception of business compliance. \cite{porter2014response} responded to these criticisms by claiming that other authors such as \cite{elkington1994towards} and \cite{emerson2003blended} made important contributions to this area of research even though they focused on different aspects. Porter and Kramer further claimed that SV has gained so much traction because it aligns social progress with corporate self-interest in a concrete and highly tangible way. 

A recent literature review by \cite{dembek2016literature} attempted to answer the question: is shared value a theoretical concept or a buzzword? The authors concluded that, beyond the strong interest generated, SV conceptualization is vague, and it presents important discrepancies in the way it is defined and operationalized. It also overlaps with many other related concepts. By reviewing how SV was defined in the literature and what empirical instances were used to illustrate the concept, \cite{dembek2016literature} identified several key areas to develop the concept of SV from a theoretical perspective. Next, we transform these key areas in three main research questions: 

\begin{itemize}
    \item \textbf{RQ1}: Who should benefit from SV? 
    \item \textbf{RQ2}: How can we measure the outcomes of SV?  
    \item \textbf{RQ3}: How can we manage conflicts between social and economic goals? 
\end{itemize}

In order to address the previous research questions, we follow a theoretical and quantitative approach in a similar way to recent approaches to the related concept of circular economy \citep{korhonen2018circular,garcia2019defining}. Concepts, meanings and examples related to SV have been thoroughly discussed \citep{porter2006strategy,porter2011big,aakhus2012revisiting,crane2014contesting,dembek2016literature}. However, there is a need for an abstract architecture and a quantitative approach that enable researchers to guide the discussion and enrich further works on the topic. Even though SV is mainly an economic concept, current research do not approach SV from a quantitative perspective. The techniques to measure the benefits (economical and societal) that derive from SV policies are still missing. 

As a result, we here introduce and develop shared value economics (SVE). We provide a theoretical framework for SVE by integrating three distinct but complementary theories: (1) multiagent systems; (2) utility theory; and (3) multiple criteria decision making. Conveniently, we use multiagent systems to address RQ1, we rely on utility theory to answer RQ2, and we propose multiple criteria decision making to solve RQ3. Once we characterize shared value economics by means of an axiomatic approach, we address an additional interesting research question:
\begin{itemize}
    \item \textbf{RQ4}. Given an economic model, can we classify it as a SVE model or as as non-SVE model?
\end{itemize}

Summarizing, our theoretical framework builds on concept of shared value to improve its understanding and to facilitate the suggestion of economic hypotheses. This framework guides companies and external analysts in the study of shared value economics by providing the quantitative foundations of shared value economics. Our axiomatic approach allows to develop formal reasoning within the field of shared value economics. We describe such a formal reasoning by addressing the important question of characterizing existing economic models as shared value models by means of logical deductions. Finally, our proposal represents a tool to estimate the impact of different shared value policies in economic terms, hence accommodating the concept within modern economic theory.

In Section~\ref{sec:axiom}, we propose an axiomatic approach to the concept of shared value. Next, in Section~\ref{sec:found}, we develop the quantitative foundations of shared value economics. In Section~\ref{sec:char}, we illustrate how can we classify an economic model within the field of shared value economics. Finally, Section~\ref{sec:conc} concludes and provides natural extensions of our work.

\section{An axiomatic approach to share value\label{sec:axiom}}



This section develops the concept of SV following an axiomatic approach. Shared value was defined by \cite{porter2011big} as follows.

\begin{definition}
(\textbf{Shared value}). Policies and operating practices that enhance the competitiveness of a company while simultaneously advancing the economic and social conditions in the communities in which operates.  
\label{def:SV}
\end{definition}


Next, we make a critical assumption.
\\

\noindent
\textbf{Main assumption}. Shared value can be measured. 
\\

Since our axiomatic approach is intended to quantitatively develop the concept of SV, the previous assumption is a basic requirement. To this end, we rely on utility functions as a numerical representation of preference and satisfaction of agents within an economic context. From the main assumption, we next consider the following axioms or premises.
\\

\begin{axiom} 
\textbf{Coalition formation}. Shared value results from a coalition.\\
\label{ax1}
\end{axiom}

Handling interactions between multiple SV agents will usually require some degree of collaboration to achieve their respective goals. As a result, the formation of coalitions represents a fundamental requirement for SV. The rationale for such a statement is that collaboration is a necessary step to establish policies that ultimately lead to joint performance improvement. For instance, in a supply chain scenario, companies interested in reducing the environmental impact of production processes by replacing virgin raw materials with recycled materials are forced to collaborate with current or new suppliers \citep{carter2008framework}. In ridesharing, users collaborate with other users through social networks in order to minimize travel costs and reduce pollution \citep{agatz2012optimization,ostrovsky2018carpooling}.

Coalition formation aims to determine how much collective shared value can be obtained by forming a coalition \citep{shehory1995task,sandholm1999coalition,cerquides2013tutorial,bistaffa2017cooperative}. Formally, given a finite set of agents $\mathcal{A}:\{a_1, a_2, \ldots, a_n\}$, a coalition is a subset $S \subseteq A$ of agents. We can evaluate each coalition $S \in 2^\mathcal{A}$ by means of a characteristic function $v : 2^\mathcal{A} \rightarrow \mathbb{R}$, that maps each coalition  to its value. Then, a coalition structure ($CS$) is a partition of the set of agents into disjoint coalitions and the value of a 
$CS$ is assessed through the sum of the values of its coalitions  \citep{bistaffa2017cooperative}:
\begin{equation}
V(CS) = \sum_{S \in CS} v(S).   
\end{equation}

We here do not consider coalitions of size 1, also called singletons, in order to respect the essence of Definition \ref{def:SV}. However, the set of all possible coalitions excluding singletons is $2^{|\mathcal{A}|}-|\mathcal{A}|-1$. One way to address this issue is to impose restrictions on coalition formation by means of graph theory \citep{myerson1977graphs,demange2004group,voice2012coalition}. In this setting, an undirected graph $G=(\mathcal{A},\mathcal{L})$, is a set $\mathcal{A}$ of nodes representing agents and a set $\mathcal{L} \subseteq \mathcal{A} \times \mathcal{A}$ of links or edges establishing the relationships between the agents. As a result, links enable only connected agents to form a feasible coalition if its members represent the nodes of a connected subgraph of $G$ induced by $S$, i.e., for each pair of agents $a_i$ and $a_j$ belonging to $S$, there is a path in $G$ that connects $a_i$ and $a_j$ \citep{bistaffa2017cooperative}.

As an illustrative example, consider the main stakeholders of a company and its relationships as shown in Figure \ref{fig:graph}. A feasible coalition can be established between the company and its customers, between the company and its employees, but cannot be established between its customers and its employees. Considering now the company's supply chain, a coalition can be established between suppliers $S_1$, $S_3$, $S_4$ and $S_5$, but cannot be established between suppliers $S_2$ and $S_5$ because there is no link connecting suppliers $S_2$ and $S_5$.

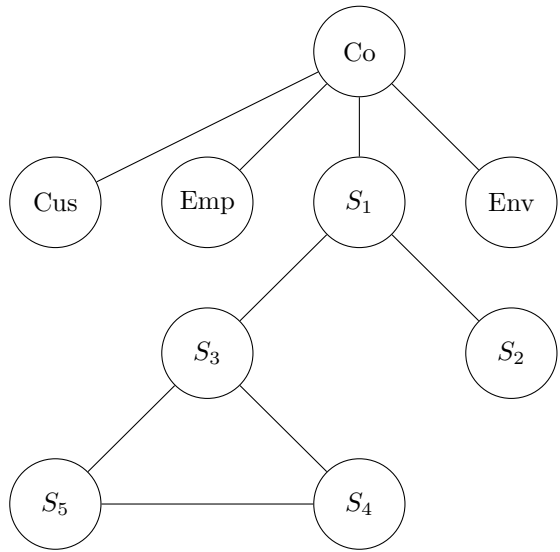
\begin{figure}[!htb]
\centering
\begin{tikzpicture}[node distance = 1cm]
\node[place] (level0) [minimum size=12mm]  {Co};
\node[place] (level1right) [below of = level0, minimum size=12mm]  {$S_1$}
	edge[] node[right] {}  (level0);
\node[place] (level1below) [right of = level1right,  minimum size=12mm]  {Env}
	edge[] node[right] {}  (level0);
\node[place] (level1left) [below of = level0, left of= level0, minimum size=12mm]  {Emp}
	edge[] node[left] {}  (level0);
\node[place] (level1Cus) [left of = level1left,  minimum size=12mm]  {Cus}
	edge[] node[right] {}  (level0);
\node[place] (level2right) [below of = level1right, right of= level1right, minimum size=12mm]  {$S_2$}
	edge[] node[right] {}  (level1right);
\node[place] (level2left) [below of = level1right, left of= level1right, minimum size=12mm]  {$S_3$}
	edge[] node[left] {}  (level1right);
\node[place] (level3right) [below of = level2left, right of= level2left, minimum size=12mm]  {$S_4$}
	edge[] node[right] {}  (level2left);
\node[place] (level3left) [below of = level2left, left of= level2left, minimum size=12mm]  {$S_5$}
	edge[] node[left] {}  (level2left)
	edge[] node[left] {}  (level3right);
\end{tikzpicture}
\caption{\label{fig:graph}A graph representing a company (Co) and its links to stakeholders such as customers(Cus), employees(Emp), suppliers ($S_1$ to $S_5$), and the natural environment (Env).}
\end{figure}

In the context of SVE, we assimilate coalition value calculation to SV assessment. Thus, we are interested in obtaining the most valuable coalition structure between intelligent agents.  However,  this evaluation cannot be only expressed in terms of economic criteria. If we want to adhere to what is stated in Definition~\ref{def:SV}, we need a second axiom.
\\



\begin{axiom} 
\textbf{Sustainability}. Shared value coalitions are sustainable.\label{ax2}
\\
\end{axiom}

At the core of the concept of SV, there is a need to consider not only the economic but also the social aspects of corporate policies. The term social is used by \cite{porter2011big} with a broad meaning that also includes environmental issues. From Definition \ref{def:SV}, we infer that economic and social criteria must be considered to make decisions in SVE. As a result, a second important feature of SVE is that it deals with problems located at the intersection of economic and social aspects. Thus, economic and social performance are the key criteria to elicit the best solutions as shown in Figure \ref{fig:sustainability}. 

Generally accepted accounting principles are used to refer to the standards and procedures that companies should follow to elaborate their financial statements. Similarly, there is a need to determine what policies generally constitute shared value. To solve this problem, we here propose the concept of sustainability. The intersection of economic and social (including environmental) aspects is usually coined as sustainability (see e.g. \cite{carter2008framework}). Then, a hypothetical project evaluated in terms of expected profits but also in terms of gender equality is, in essence, evaluated in terms of sustainability. Moreover, a new production process evaluated in terms of costs, workers safety and environmental impact is indeed evaluated in terms of sustainability. The idea that sustainability consists of three components (economics, society and the natural environment) was developed by \cite{elkington1994towards,elkington1998partnerships,elkington2004enter}. We here adapt this idea to develop a framework for SVE in which economic and social (including environmental) criteria are simultaneously considered, as stated in Definition~\ref{def:SV} and shown in Figure \ref{fig:sustainability}. In the context of SVE, we assume that sustainability is the combination of at least two criteria, one economic and one social, but maybe many others. As a result, we introduce the concept of sustainable coalition.

\begin{definition} 
(\textbf{Sustainable coalition}). Given a set of agents $\mathcal{A}$, a coalition $S \in 2^{\mathcal{A}}$ is said to be sustainable when its characteristic function $v: 2^\mathcal{A} \rightarrow \mathbb{R}$ is of the form:
\begin{equation}
v(S)=g(u_e,u_s)    
\end{equation}
where $u_e \in \mathcal{U}$ denotes an economic utility function and $u_s \in \mathcal{U}$ denotes a social utility function.
\label{def:sustcoal}
\end{definition}

The characterization of a utility function as sustainable may be problematic since we are trying to derive a mathematical expression from a multifaceted concept. Remarkable efforts have been made \citep{sachs2012millennium,griggs2013policy,un2014sdg} to propose a set of goals that considers economic, social and environmental dimensions to improve people’s lives and protect the planet for future generations. In the context of SVE, we propose the definition of a sustainable utility function as the combination of economic and social utility functions within a coalition. The rationale behind this proposal is twofold: first, SV is an economic and social concept; and second, the use of utility functions in economics and sociology is a well established field of research. In SVE, however, we restrict the domain of social utility functions to those that directly affect the economic and social conditions of an organization and its stakeholders (see Definition \ref{def:SV}).

\begin{figure}[!htb]
	\centering 
	\includegraphics[scale=0.4]{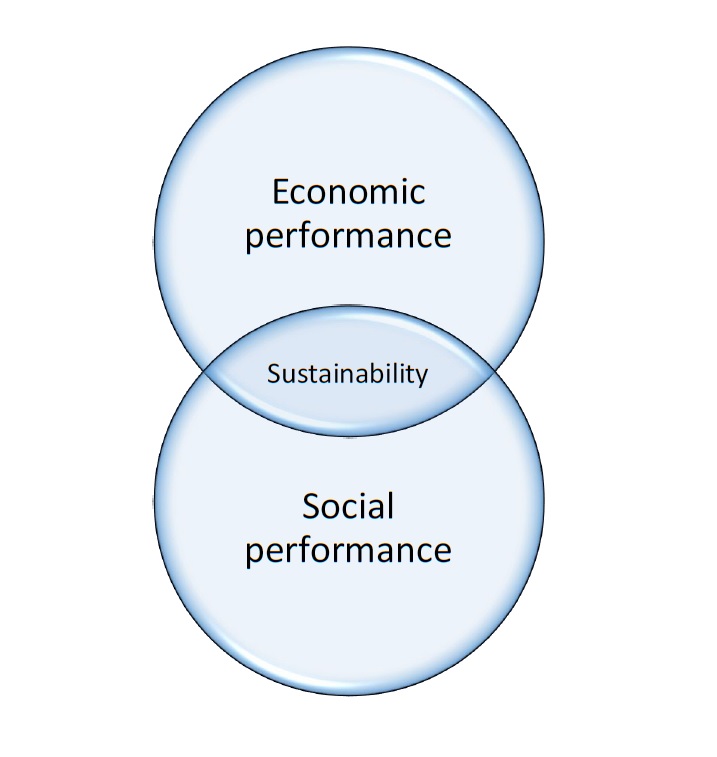}
    \caption{A graphical definition of sustainability.}
	\label{fig:sustainability}
\end{figure}

To illustrate Definition \ref{def:sustcoal}, we rely on the following examples. Indeed, this definition implies the presence of at least one economic utility function (we are dealing with economics) and at least one social utility function (we are dealing with shared value). \cite{porter2011big} refer to food companies refocusing from better taste and quantity to increase consumption to better nutrition. Here consumption is the economic utility function for the organization and nutrition is the social utility function for customers. The authors also report distribution companies redesigning packaging and rerouting tracks to reduce environmental impact. In this case, the sum of packaging and routing costs is the economic utility function and the environmental impact is the social utility function.

Axioms \ref{ax1} and \ref{ax2} characterize SVE by establishing its dimension and scope. However, a method to elicit the best solutions in terms of SV is still missing. What happens when economic, social and environmental criteria are in conflict? We need to look for a compromise solution from a SV perspective. We need a method to make decisions. The following axiom characterizes the way in which SV agents make decisions from a rational perspective.
\\

\begin{axiom}
\textbf{Consistency}. Shared value is both internally and externally consistent.\label{ax3}
\\
\end{axiom} 

As any economic paradigm, the theory of SV aims to achieve both internal and external consistency. Following \cite{popper1959logic}, internal consistency means that statements and conclusions of a theory are drawn by means of logical deduction. In addition, external consistency means that the conclusions derived from it can be tested by means of empirical applications. In other words, that the theory should be useful both in theory and in practice. The internal consistency of the theory of SV is shown by \cite{porter2011big}. In this paper, we extend the consistency of the theory of shared value in its external side. To this end, we rely on multiple criteria decision making (MCDM) as a way to ensure external consistency. The reasons to follow this way of action are the next: 1) SV decisions requires a balance between multiple agents and multiple objectives; 2) SV agents within a complex environment may face problems with imperfect information. MCDM provides sound methods to deal with these two issues as we next further elaborate.

Even though best solutions in economics have been traditionally determined by single objective optimization, decision making with multiple criteria is the rule rather than the exception in many research areas including economics \citep{zeleny1982multiple,yu1985multiple,romero1991handbook,ballestero1998multiple,branke2008multiobjective,zopounidis2016multiple}. SVE is not one of these exceptions since we are considering economic and social criteria to make decisions. Then, MCDM becomes a necessary and useful tool. 

MCDM dates back to Benjamin Franklin, when he suggested to divide a sheet of paper by a line into two columns, writing pros on one column and cons on the other one \citep{maccrimmon1973overview}. The decision rule is simple: the column with more points shows the way to act. From the great variety of MCDM techniques, we here focus on those that accommodate well to SVE. To this end, we propose a third condition for SV agents, namely, they are rational in the sense that they aim to maximize their utility whatever it may be. The method used to combine agent's utility by means of a coalition becomes a critical point in our approach to SVE. The decisions made by individual agents may have a different impact in the optimization of the coalition value due to the presence of bargaining power. For instance, in the possible coalition between a big corporation and its workers, the bargaining power is on the organization side. On the contrary, in the possible coalition between the same corporation and the government, the bargaining power is on the stakeholder side due the government's capability of law enforcement. However, it is important to highlight that in this paper we take the perspective of an external analyst to study SVE.

The notion of utility was formally developed by \cite{von1953theory} to describe individual preferences when making decisions. The inherent complexity associated to multiple agents with multiple preferences results in the use of well-defined utility functions as a sound surrogate for individual and global preferences. For instance, profit maximization is used as a surrogate for utility optimization of the owners since they are entitled to a share of profits \citep{mas1995microeconomic}. In our context, we use functions that ultimately depends on possible states for each agent as a surrogate for utility. In Definition \ref{def:sustcoal}, we expressed the value of a sustainable coalition between agents as a function $v=g(u_e,u_s)$ including, at least, an economic criteria $u_e$ and a social criteria $u_s$, but maybe many others. Now, we go one step further by requiring agents to be rational in the sense that they look for solutions that maximize their individual utility. In order to propose the right MCDM technique to analyze SVE under Axioms \ref{ax1}, \ref{ax2} and \ref{ax3}, we consider two plausible scenarios.
\\

\noindent 
\textbf{Scenario 1}. SV agents do not have specific targets for their utility functions.
\\

A common approach to economics is utility optimization under a given set of constraints. Rational agents aim to maximize their utility functions without setting any specific target. The logic is simple: the higher utility, the better. When considering two (or more) conflicting goals, the well-known concept of Pareto efficiency provides the combination of achievement such that one goal cannot be improved without decreasing the achievement of the second one. In this context, Compromise Programming (CP), introduced by \cite{zeleny1973compromise} and \cite{yu1973class}, represents a suitable tool to obtain compromise solutions. CP is based on the concept of ideal point, usually infeasible, given by the optimum values for the criteria under analysis, and the Zeleny's axiom of choice \citep{zeleny1973compromise}: alternatives that are closer to the ideal are preferred to those that are further. Finally, a family of distance (loss) functions $L_h$ of parameter $h \in \left[1,2, \ldots, \infty\right]$ are used to select the best compromise solutions: 
\begin{equation}
L_h = \left[ \sum_{j=1}^n w_j^h (u_j^* - u_j)^h \right]^{(1/h)}
\label{eq:dist}
\end{equation}
where $u_j^*$ is maximum attainable value for the utility function $u_j$ of the $j$-th agent. In our context, we use weights $w_i$ as an expression of the bargaining power of agents. By varying parameter $h$ between 1 and $\infty$, we derive the compromise set. An important advantage of a CP approach to SVE is the possibility to explore solutions within the range of the maximum aggregated achievement of criteria (by minimizing Manhattan distances $L_1$) and the maximum balance balance among criteria (by minimizing Chebyshev distances $L_{\infty}$). 

Within economics research, \cite{ballestero1991theorem,ballestero1993weighting,ballestero1994utility} show that this compromise set can be interpreted as a landing area (a subset of the Pareto efficient frontier) for utility maximization under plausible restrictions. Thus, CP represents a good surrogate of the utility maximum for rational agents. This argument is corroborated by subsequent works on macroeconomics \citep{andre2008using,andre2009defining}, microeconomics \citep{ballestero1998multiple,ballestero2000project,ballestero2015compromise}, and portfolio selection \citep{ballestero1998approximating,ballestero2004selecting}. In the context of SVE, we rely on CP to study compromise solutions from an external analyst point of view when social and economic criteria are in conflict, i.e., when economic and social achievement is out of the ``sweet-spot". 


Even though agents in Scenario 1 do not have specific targets for their utility functions, our CP setting usually requires that maximum (or anchor) and minimum (or nadir) values can be independently established for each of the utility functions (criteria) used by agents. These values allow to derive normalized indexes avoiding meaningless comparisons between criteria when utility functions are not homogeneous (different units or scales). In our CP approach to SVE, agents are able to forecast and compute expectations of their utility functions. Agents are then considered as artificially intelligent agents in the sense of \cite{marimon1990money}. This approach is related to the rational expectations theory introduced by \cite{muth1961rational} and later developed by \cite{sargent1975rational,sargent1976rational,lucas1981rational} and many others. Rational expectations theory describes economic situations in which outcomes do not systematically differ from what people expect to happen. The underlying assumption is that an agent's rationality is expressed in its behavior to maximize utility. This paradigm clearly departs from the Simonian bounded rationality which we consider in the Scenario 2.
\\

\noindent 
\textbf{Scenario 2}. SV agents have satisficing targets for their utility functions.
\\

Instead of assuming that economic agents are perfectly rational as in Scenario~1, we can reasonable assume that they face problems with imperfect information. The current complexity of the economic environment may make very difficult the maximization of their goals \citep{ballestero2000project}. An alternative way of action is following a satisificing logic in the Simonian sense of bounded rationality \citep{simon1955models,simon1979rational}. Bounded rationality suggests a different approach to rational expectations theory. For an interesting comparison between both theories see \cite{sent1997sargent}. Satisficing methods aim to find solutions that provide sufficiently satisfactory levels of goals by means of Goal Programming (GP). GP is probably the most widely used technique to put into practice this satisficing logic. GP is a multiobjective decision support tool that aims to minimize the sum of deviations between goal achievement and the aspiration levels (or targets) of the goals \citep{charnes1977goal,romero1991handbook,tamiz1998goal,jones2010practical}. By minimizing an objective function with the sum of deviations with respect to the satisfactory aspiration levels $b_j$ for each agent, we say that GP follows a satisfying logic. A weighted GP model can be expressed as follows:
\begin{equation}
\operatorname{min} \sum_{j=1}^n \left[w_j (\delta_j^+ +  \delta_j^-)  / b_j   \right] 
\label{eq:gp_obj}
\end{equation}
subject to:
\begin{equation}
u_j =  b_j + \delta_j^+ - \delta_j^-  \hspace{3mm} j =1,2, \ldots, n
\end{equation}
where $u_j$ is the utility function for the $j$-th agent, and $\delta_j^+,  \delta_j^- \geq 0$ are positive and negative deviations of the $j$-th utility with respect to its target $b_j \neq 0$. As in Scenario 1, we follow the approach of using weights $w_j$ adding up to one as an expression of the bargaining power of each agent. 


\section{Quantitative foundations of shared value economics\label{sec:found}}


Once we have presented our axiomatic approach, we are in a position to develop SVE by quantitatively defining SV concepts. To this end, we first rely on single output resource allocations as a common tool in economics \citep{mas1995microeconomic}. 



\begin{definition}
(\textbf{Resource allocation}). Given an economy with $n$ agents and $q$ goods, a resource allocation for the $j$-th agent is a vector $\boldsymbol{y}_j \in \mathbb{R}^q$ that describes the non-negative quantity $x_i$, with $i=1, \ldots, q-1$, of the resources required for an agent to obtain $y_q$ units of a desired output. 
\begin{equation}
\boldsymbol{y}_j = (-x_1,-x_2, \ldots, -x_{q-1}, y_q)  
\end{equation}
\end{definition}

Note that the use of some inputs may be zero. An example of a resource allocation for an industrial company could be $(-7,-12,-6,10)$, when they use 7 units of raw materials, 12 hours of machinery, and 6 hours of worker-time to produce 10 units of a final product. For an employee of this company, a resource allocation could be $(0,0,-1,1)$, when she/he dedicates one hour of her/him time and knowledge to deliver one hour of work to the company.


\cite{porter2011big} define value as benefits relative to costs. From allocation $\boldsymbol{y}_i$, we can straightforwardly derive a measure of value in economic units from outcomes and resources. Let us consider price vector $\boldsymbol{p}_j \in \mathbb{R}^q$, where each element $p_{jk}$ is the price paid by agent $j$ for resource $k$ when $k < q$, or obtained for output $k$ when $k=q$. Price vectors must be different for each agent since their produced outputs are also different. Then, we can compute the value $v_j$ obtained by agent $j$ as follows:
\begin{equation}
v_j = \boldsymbol{p}_j^T \cdot \boldsymbol{y}_j
\end{equation}
where $T$ denotes vector transposition. Here, we are interested in multiple allocations for a set of agents (see Axiom 1). A special case of multiple resource allocations, which we call coalition formation in the context of SVE, is a combination of $n$ resource allocations $(\boldsymbol{y}_1, \ldots, \boldsymbol{y}_n)$, for $n$ different agents with $n \geq 2$. Then, the value $v(S)$ of a coalition $S \subseteq \mathcal{A}$ of $n$ agents can be expressed as follows:
\begin{equation}
v(S) = \sum_{j=1}^n v_j(S)
\label{eq:covalue}
\end{equation}

We here interpret the utility derived from coalition $S$ as the value that agent~$j$ can achieve by forming coalition $S$:
\begin{equation}
    u_j(S):=v_j(S).
\end{equation}

From the set of all possible coalitions, we are interested in the set of feasible utility outcomes such that no other coalition can improve the payoffs of all its members. This fact ensures that agents have an incentive to form the coalition, hence leading to the game-theoretic concept of the core \citep{shapley1971cores,mas1995microeconomic,leyton2008essentials}. 

\begin{definition}
(\textbf{The core}). A payoff vector $\boldsymbol{u} \in \mathbb{R}^n$ is in the core of a transferable utility coalitional game $(\mathcal{A},v)$, where $\mathcal{A}$ denotes the set of agents and $v$ is the characteristic function, if there is no other coalition $S$ such that $u_j(S) \geq u_j(\mathcal{A})$ for all agents $j=1, \ldots, n$, and $u_j(S) > u_j(\mathcal{A})$ for some $j$. Formally, $\boldsymbol{u}$ is in the core of game $(\mathcal{A},v)$ if and only if:
\begin{equation}
    \sum_{j \in S} u_j \geq v(S) \hspace{3mm}\forall S \subseteq \mathcal{A}
\end{equation}
\end{definition}




The core requires that the sum of payoffs derived from any subcoalition must be at least as large as the amount that these agents could share if they formed a subcoalition. In other words, agents are better off if they are part of the grand coalition than if the form any subcoalition. From the wide range of coalitional games, we here focus on cooperative games with a non-empty core since, by definition, SV agents have an incentive to form the grand coalition with all agents involved. Then, the core payoff vector $\boldsymbol{u}$ satisfies:
\begin{equation}
    \sum_{j \in \mathcal{A}} u_j = v(\mathcal{A}).
\end{equation}

Furthermore, no sub-coalition $S$ has an incentive to break the grand coalition $\mathcal{A}$ since it would result worse-off. Next, we rely on the concept of core to define shared value games.

\begin{definition}
(\textbf{Shared value game}). A coalitional game $(\mathcal{A},v)$ is said to be a shared value game if and only if it is sustainable and it has a non-empty core.
\label{def:svgame}
\end{definition}


Axiom \ref{ax1} is implicit in Definition \ref{def:svgame} since a non-empty core requires coalition formation as a basic assumption in game theory. Sustainability is required to respect Axiom \ref{ax2} as proposed in Definition \ref{def:sustcoal}. Axiom \ref{ax3} is ensured by both the internal and external consistency of game theory. The rationale behind the use of the concept of the core to define a SV game is the fact that all agents in a shared value game have a strong motivation to cooperate. According to Definition \ref{def:svgame}, we are particularly interested in the type of games with a non-empty core disregarding the fact that, in general, the core is hard to compute. However, it is shown elsewhere \citep{shapley1971cores} that every convex game has a non-empty core.

\begin{definition}
(\textbf{Convex game}). A coalitional game $(\mathcal{A},v)$ is convex if the following inequality holds:
\begin{equation}
v(S \cup T) \geq v(S) + v(T) - v(S \cap T) \hspace{3mm} \forall S,T \subset \mathcal{A}.
\end{equation}
\end{definition}

To illustrate the concept of convexity, let us consider the possible coalition between the employees, denoted by $a_1$, of a law firm, denoted by $a_2$. The company have access to a number of customers interested in its services. However, it has not the knowledge and time required to serve its customers. This time and knowledge is provided by the firm's employees that, at the same time, have no access to the final customers. There are two available options to both lawyers and the company that are either to cooperate or not to cooperate. Assume that if they form a coalition to cooperate the payoffs are, respectively,  $u_1(\mathcal{A})=4$ and $u_2(\mathcal{A})=3$. If they do not cooperate the payoffs are zero to both agents. As a result, the coalition between the firm and its employees has value $v(\{a_1\} \cup \{a_2\})=7$, according to equation \eqref{eq:covalue}, and there is no incentive to break the coalition in a convex game since $v(\{a_1\}) + v(\{a_2\})$ would be zero and $\{a_1\} \cap \{a_2\} = \emptyset$.

Even though the existence of a non-empty core is a desirable property it does not ensure that a SV coalition is in any sense equitable. For example, the extreme points of a two-dimensional utility frontier $z(u_1,u_2)$ maximize the utility of one agent and simultaneously minimize the utility of the other agent as shown in Figure \ref{fig:equality}. It is then necessary to look for a compromise solution according to Axiom \ref{ax3}. Let us assume that women in our law firm example are paid less than men for the same job and responsibilities. All employees, represented by agent $a_1$, are interested in maximizing gender equality $u_1 = E$. For simplicity, we also assume that there are as many women as men in the company. Then, if $C_w$ is the total amount paid to woman and $C_m$ is the amount paid to men within a given period, we can express equality as follows:
\begin{equation}
    C_w = E \cdot C_m.
\label{eq:cost}
\end{equation}

The relationship between gender equality and wages is then quite simple. When $E=1$, woman are paid equal to men. When $E < 1$ women are paid less than men. The company, represented by agent $a_2$, is interested in maximizing benefits $u_2(S)=B$, which we can express as a real valued function of sales $Q \geq 0$ and human resources costs $C_w$ and $C_m$:
\begin{equation}
    B = Q - C_w - C_m.
\label{eq:benef}
\end{equation}

By merging Equations \eqref{eq:cost} and \eqref{eq:benef}, we can express benefits $B$ in terms of equality $E$:
\begin{equation}
    B = Q - E \cdot C_m - C_m = Q - C_m (1 + E).
\label{eq:EB1}
\end{equation}

Equation \eqref{eq:EB1} is the utility trade-off $z(E,B)$ in a normalized Equality-Benefits (E-B) space as shown in Figure \ref{fig:equality}. Normalization is achieved by means of two indexes $\theta_1$ and $\theta_2$ ranging in the interval $\left[0,1\right]$. Index $\theta_1 = E$ since the domain of equality is already $\left[0,1\right]$. Index $\theta_2$ is computed as follows:
\begin{equation}
    \theta_2 = \frac{B - B_*}{B^*-B_*}
    \label{eq:theta2}
\end{equation}
where $B^*=Q- C_m$ is the maximum benefit obtained when $E=0$, and $B_*=Q-2C_m$ is the minimum benefit obtained when $E=1$ in Equation~\eqref{eq:EB1}. Basic algebra leads to the normalized utility $z(E,B)=\theta_2 = 1-\theta_1$. 

\begin{figure}[htb]
\centering
\includegraphics[width=0.8\textwidth]{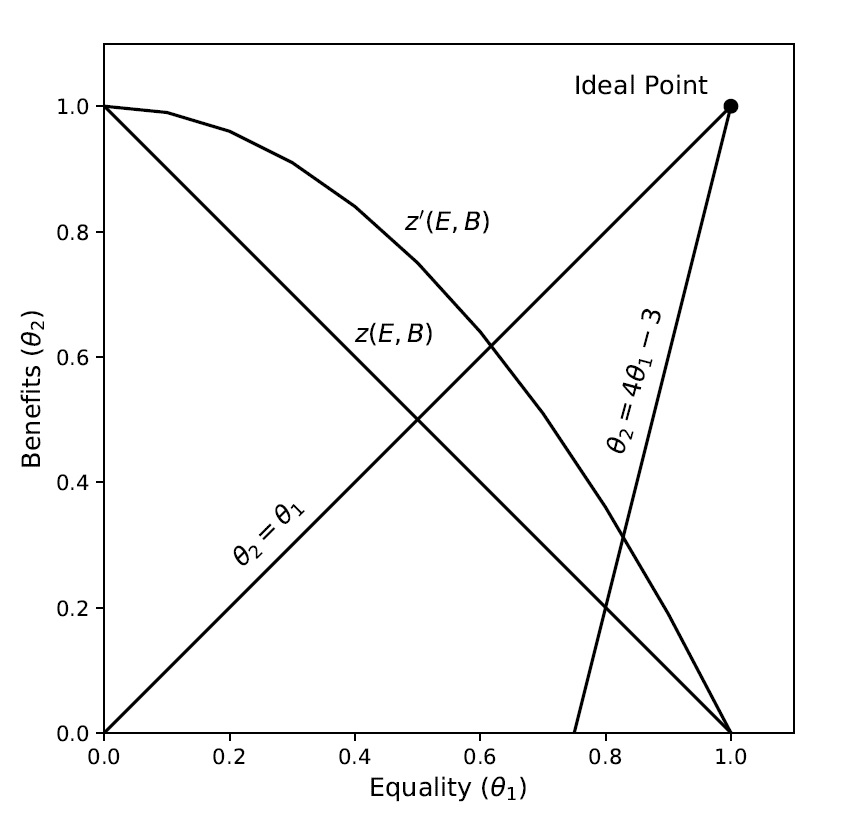}
\caption{\label{fig:equality}Creation of shared value.}
\end{figure}

The trade-off between benefits and equality is clear. The higher the equality, the lower the benefits. However, it is likely that an increase in equality will result in human resources productivity gains, hence improving benefits in the long-term by reducing costs. Let us assume that there is a training program for employees that ensures productivity gains. However, women are not willing to attend the program unless gender inequality is reduced. Then, cooperation (a coalition) between the law firm and its employees is a good strategy. Let us further consider that the coalition formation ensures a cost reduction proportional to the square of equality so that higher equality achievements can be obtained for the same desired level of benefits. In order to include productivity gains in our analysis, we can alternatively express benefits as follows:
\begin{equation}
    B' = Q - C_m - C_m E^2 = Q -  C_m (1 + E^2).
\label{eq:EB2}
\end{equation}

After normalizing $B'$ using equation \eqref{eq:theta2}, we obtain a new combined utility function $z'(E,B)=\theta_2 = 1 - \theta_1^2$. We observe in Figure \ref{fig:equality} a displacement of $z'(E,B)$ with respect to $z(E,B)$ towards the ideal equality-benefits combination, represented by the point $(1,1)$ in a normalized E-B space. Note also that the area under curve $z'(E,B)$ is higher than the area under curve $z(E,B)$. Adapting the recommendations in \cite{salas2017characterizing} to our context, we propose the following quantitative definition of shared value creation.

\begin{definition}
(\textbf{Shared value creation without targets}). Let $z(\theta_1,\theta_2)$ be the combined utility curve in a normalized space $\theta_1-\theta_2$ derived from a shared value game of two agents without specific targets where $\theta_1$ and $\theta_2$ are, respectively, a social and an economic index ranging in $\left[0,1\right]$. Shared value creation (SVC) is the increase in the area under the curve (AUC) produced by a displacement from $z(\theta_1,\theta_2)$ to $z^\prime(\theta_1,\theta_2)$. SVC is formally expressed as:
\begin{equation}
SVC = AUC(z^\prime,\theta_1,\theta_2) - AUC(z,\theta_1,\theta_2) 
\end{equation}
where 
\begin{equation}
AUC(z^\prime,\theta_1,\theta_2)=\int_0^1 z^\prime(\theta_1,\theta_2)d\theta_1 > AUC(z,\theta_1,\theta_2)=\int_0^1 z(\theta_1,\theta_2)d\theta_1.
\end{equation}
\label{def:svc1}
\end{definition}

This definition also ensures that the best solution for $z'(E,B)$ is a better solution than the best for $z(E,B)$ in terms of parametric distance $\mathcal{L}_h$. We can also generalize Definition \ref{def:svc1} to coalitional games of the form $(\mathcal{A},v)$ by considering the hypervolume of dimension $n$ enclosed under the efficient utility frontier within an $n$-dimensional normalized space. In the case of agents of the type described in Scenario 2, the equivalent general definition of shared value creation is expressed as follows:

\begin{definition}
(\textbf{Shared value creation with targets}). Let $\boldsymbol{x} \in \mathbb{R}^m$ a vector of feasible decision variables and $g(\boldsymbol{x}) \in \mathbb{R}^n$, a general multiobjective function under a minimization context for a shared value game including social and economic goals. SVC with specific targets is the reduction obtained by some function $g^\prime(\boldsymbol{x})$ with respect to $g(\boldsymbol{x})$. SVC is then formally expressed as:
\begin{equation}
SVC = g(\boldsymbol{x}) - g^\prime(\boldsymbol{x})
\end{equation}
where
\begin{equation}
g(\boldsymbol{x}) > g^\prime(\boldsymbol{x}). 
\end{equation}
\label{def:svc2}
\end{definition}

In the context of bicriteria CP, a perfectly well-balanced solution is given by the intersection of the efficient frontier with the path $\theta_2=\theta_1$, which ensures an equitable combination of achievements as shown in Figure \ref{fig:equality}. However, in this particular coalition, we face an interesting paradox. We here describe the paradox of equality as the fact that perfectly well-balanced solutions do not result in maximum equality indexes. At most, medium equality indexes are possible depending on the efficient frontier. Furthermore, equality indexes that are close to the desirable value of one are only possible when enough bargaining power produce extremely imbalanced solutions. For instance, if we follow the approach of expressing bargaining power by means of weights, it can be shown (see e.g. \cite{ballestero1998multiple}) that a well balance solution is given by:
\begin{equation}
    w_1 (1- \theta_1) = w_2 (1- \theta_2).
    \label{eq:bal}
\end{equation}

In the example of Figure \ref{fig:equality}, assume that the relative bargaining power of employees with respect to the company is expressed by $w_1=0.8$ and $w_2=0.2$. This expression implies that employees have four times more power than the law firm. As a result, the well balanced solution including bargaining power is given by the intersection of the line $\theta_2=4 \theta_1 - 3$, derived from equation~\eqref{eq:bal}, with the efficient frontier. Despite of the imbalance in the bargaining power, best compromise solutions are not able to produce maximum equality as the paradox of equality states. These results are consistent with the trade-off between equality and efficiency pointed out by \cite{okun2015equality}.

\section{Characterizing economic models\label{sec:char}}

Once we have described SVE from a quantitative point of view, we are in a position to analyze economic models proposed in the literature according to our axiomatic approach. As mentioned in the introduction, we here address the following research question: given an economic model, can we classify it as a SVE model? Here, we follow the approach of characterizing an economic model as a SVE model if the axioms introduced in Section \ref{sec:axiom} are ensured.

\subsection{A portfolio selection model}

Let us first consider the classical portfolio selection model proposed by \cite{markowitz1952portfolio}. An investor aims to obtain the best allocation of financial resources within a set of possible assets. The proposed model looks for solutions that simultaneously maximize returns and minimize risk. Since both returns and risk are objectives in conflict, there is a need to find a compromise solution. Let us further consider that a hypothetical investor is willing to include in the selection process social criteria as described in \cite{ballestero2012socially}. One may reasonably ask if this socially responsible investment model can be labelled as SVE. According to Axiom \ref{ax1}, the answer is no. Even though this model could perfectly fit under the sustainability and consistency axioms, the presence of only one agent, the investor, determines the characterization as a non-SVE model since the feature of coalition formation is not present in the model. A different characterization could take place if the managers of a socially responsible fund \citep{ballestero2012socially,ballestero2015socially} include in the model the particular preferences of the buyers of the fund. 

In the context of mutual funds, we can further consider the model proposed by \cite{ballestero2004selecting}. Individual investors ($a_i: \hspace{1mm} i=1,2 \ldots, n$) behave as customers buying the services of the fund rather than as decision-makers. The managers of the fund ($\overline{m}$) aim to differentiate from their competitors by offering thematic portfolios as a marketing strategy. Preferences of individual investors for profitability and risk are elicited by means of a questionnaire. The existence of this questionnaire is the way in which the coalition formation of Axiom \ref{ax1} between individual investors and the fund takes place. Assume also that sustainability in Axiom~\ref{ax2} is guaranteed by the presence of portfolios characterized by socially responsible assets. Finally, consistency in Axiom~\ref{ax3} is ensured by the CP method based described in \cite{ballestero2004selecting} that outputs a portfolio ranking derived from the questionnaire and the percentage of individual investors interested in different sets of thematic portfolios. The existence of a non-empty core is guaranteed by the following reasoning.

\begin{proposition}
Let $\mathcal{A}=\{a_1, a_2, \ldots, a_n, \overline{m}\}$ be a set formed by $n$ individual fund investors $a_i: \hspace{1mm} i=1,2 \ldots, n$, and a fund manager $\overline{m}$. The game $\mathcal({A},v)$ is a shared value game if $v$ is a function of the social and economic preferences of both investors and fund manager.
\end{proposition}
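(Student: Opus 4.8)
The goal is to verify the two conditions of Definition \ref{def:svgame}: that the game $(\mathcal{A},v)$ is sustainable in the sense of Definition \ref{def:sustcoal}, and that its core is non-empty.

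First I would dispatch sustainability, which is the routine half. The hypothesis grants that $v$ depends on the social and economic preferences of both the investors and the manager, so I would simply put this in the form demanded by Definition \ref{def:sustcoal} by writing $v(S)=g(u_e,u_s)$, where $u_e$ aggregates the economic preferences elicited through the questionnaire (profitability and risk of the thematic portfolios) and $u_s$ aggregates the social preferences (the presence of socially responsible assets). This identification is immediate and establishes that the game is sustainable.

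The substantive half is the non-empty core. Rather than computing the core directly, which the excerpt itself notes is generally hard, I would invoke the result recalled earlier and attributed to \cite{shapley1971cores} that every convex game has a non-empty core, and reduce the task to verifying the convexity inequality $v(S\cup T)\ge v(S)+v(T)-v(S\cap T)$. The decisive structural fact is that each individual investor $a_i$ behaves as a customer, so a coalition lacking the manager produces no tradeable value: $v(S)=0$ whenever $\overline{m}\notin S$, and value arises only when $\overline{m}$ and at least one investor act together. I would then split the verification according to whether $\overline{m}$ lies in $S$ and in $T$. In the three cases where at least one of $S,T$ omits the manager, two or three terms on the right-hand side vanish and the inequality follows from monotonicity of $v$ in the investor set, since adding customers never lowers the fund's value.

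The only delicate case is $\overline{m}\in S\cap T$, where all four terms are active and convexity reduces to supermodularity of $v$ among investors; this is the step I expect to be the main obstacle. Here I would argue that an investor's marginal contribution is non-decreasing in the coalition it joins: because the manager offers thematic portfolios ranked through the aggregated questionnaire, a larger pool of investors with compatible preferences sharpens the match and raises the value each additional investor brings. Since the bare hypothesis, that $v$ is merely "a function of the social and economic preferences," does not by itself force supermodularity, I would make the customer and complementarity structure of the model of \cite{ballestero2004selecting} an explicit assumption; as a fallback I would abandon convexity and instead exhibit a core allocation directly, distributing the surplus of the grand coalition among the manager and investors in proportions that no manager-containing subcoalition can profitably block. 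Either route, combined with the sustainability established above, yields by Definition \ref{def:svgame} that $(\mathcal{A},v)$ is a shared value game.
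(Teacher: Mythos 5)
Your proposal is correct in substance, but it takes a genuinely different and more demanding route than the paper. The paper never attempts to verify convexity over all coalition pairs: it fixes the single disjoint pair $S=\{a_1,\ldots,a_n\}$ (all investors) and $T=\{\overline{m}\}$, writes $v(S\cup T)=u(S)+u(T)$, and argues informally that $u(S)\ge v(S)$ (buying the fund is at least as good as any alternative socially responsible investment) and $u(T)>v(T)=0$ (the manager earns nothing if the coalition breaks); since $S\cap T=\emptyset$, the inequality $v(S\cup T)\ge v(S)+v(T)-v(S\cap T)$ holds for that one pair, and the paper immediately concludes that the core is non-empty. Strictly speaking, that is superadditivity at one partition, which by itself implies neither convexity nor a non-empty core (a three-player majority game is superadditive yet has an empty core), so the paper's argument rests on the same kind of unstated structural assumptions that you chose to make explicit. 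Your case analysis on membership of $\overline{m}$ pinpoints exactly where those assumptions matter: the only delicate case, $\overline{m}\in S\cap T$, simply never arises in the paper's single-pair check, and you are right that the bare hypothesis (``$v$ is a function of the social and economic preferences'') does not force supermodularity there. Your fallback is in fact the cleanest fully rigorous path: with $v(C)=0$ for every manager-free coalition $C$ and monotonicity of $v$, the allocation assigning the whole surplus $v(\mathcal{A})$ to $\overline{m}$ and zero to each investor satisfies $\sum_{j\in C}u_j\ge v(C)$ for all $C\subseteq\mathcal{A}$, so the core is non-empty with no appeal to convexity at all. In short, you prove more than the paper does, at the price of an extra explicit assumption (or the direct core construction), whereas the paper's proof is shorter but is best read as an informal plausibility argument restricted to the investors-versus-manager split rather than a complete verification.
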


\begin{proof}
Assume that $\mathcal{A}$ is formed by subset $S=\{a_1, a_2, \ldots, a_{n}\}$ and subset $T=\{\overline{m}\}$. Investors in $S$ declare their collective social and economic preferences to the fund manager in $T$ by means of a questionnaire. The value of coalition $S \cup T$ is the sum of the utility for both subsets of agents: $v(S \cup T)=u(S)+u(T)$. Utility $u(S)$ for investors derived from buying the fund is assumed to be at lest as good as any other fund from a socially responsible investment point of view: $u(S) \geq v(S)$. It is also reasonable to accept that utility $u(T)$ for the fund manager is greater than value $v(T)=0$ derived from breaking coalition $S \cup T$. Sustainability is guaranteed by function $v$, which includes social and economic preferences. Inequality $v(S \cup T) \geq v(S) + v(T) - v(S \cap T)$ holds and the core is non-empty. As a result, we are dealing with a shared value game as proposed in Definition \ref{def:svgame}.
\end{proof}


\subsection{A carpooling model}

\cite{ostrovsky2018carpooling} have recently proposed a carpooling model to achieve socially efficient outcomes in a transportation marketplace with autonomous driving services. The main advantage of considering an economy in which all cars are self-driving is that carpooling with self-driving is a one-sided matching problem, while with regular cars it is a two-sided one, hence reducing its complexity. The model considers a finite set of riders $m=1,2, \ldots, M$ interested in collaborating to build trips, defined as feasible combinations of one or more riders over a finite set of road segments. 

Each rider has a non-negative valuation $v_m(t)$ for every trip $t$. Riders want to maximize the difference between value $v_m(t)$ and price $p_m$ paid for the trip. By assuming that all riders act a single agent for simplicity, the global utility for the entire set of riders is expressed as follows:
\begin{equation}
    U = \sum_{m=1}^M U_m(t_m,p_m) = \sum_{m=1}^M (v_m(t_m) - p_m).
    \label{eq:rider}
\end{equation}

A regulator owning road segments imposes the payment of reduced fees for tolls for each segment of the trip in which riders are involved. The presence of tolls acts as an incentive for riders, since it allows them to share the costs of those tolls. In addition, an assignment~$A$ is a set of trips such that each rider is involved in exactly one trip $t \in A$ with an associated cost $c(t)$. This assignment requires the fulfillment of the budget-balance condition stating that the sum of prices paid by the riders for their trips is greater than or equal to the sum of total physical costs of those trips for the riders and the total tolls on the road segments involved in those trips. Finally, the regulator plays the role of a social planner interested in maximizing a real valued social surplus $P$ defined as:
\begin{equation}
    P = \sum_{m=1}^M v_m(t_m) - \sum_{t\in A} c(t)
    \label{eq:planner}
\end{equation}

Summarizing, this carpooling model requires a coalition formation between riders, aiming at maximizing their transportation utility in equation \eqref{eq:rider}, and a regulator, aiming at maximizing the social surplus described in equation \eqref{eq:planner}. The presence of many agents, namely, riders and the regulator, respects Axiom \ref{ax1}. Since the regulator aims to maximize the social surplus of an assignment and riders want to maximize their utility, both the sustainability and consistency of Axioms \ref{ax2} and \ref{ax3} represent key elements of the model. Is the core of this carpooling model non-empty? To answer this question, we rely on the following proposition.

\begin{proposition}
Let $\mathcal{A}=\{a_1, a_2, \ldots, a_M, r\}$ be a set formed by $M$ riders $a_i: \hspace{1mm} i=1,2 \ldots, M$, and a regulator $r$. The game $\mathcal({A},v)$ is a shared value game if $v$ includes the social and economic needs of riders and regulator.
\end{proposition}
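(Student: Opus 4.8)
The plan is to follow the same template the authors used for the mutual-fund proposition: partition the grand coalition into two subsets, verify the convexity inequality of the preceding definition, and then invoke the Shapley result (already cited in the excerpt) that every convex game has a non-empty core, reading sustainability directly off the structure of $v$. Concretely, I would split $\mathcal{A}$ into the subset $S=\{a_1,\ldots,a_M\}$ of riders and the singleton $T=\{r\}$ containing the regulator, so that $S\cap T=\emptyset$ and hence $v(S\cap T)=0$.

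First I would identify the two utility components that make the coalition sustainable. The riders' aggregate transportation utility $U=\sum_{m=1}^{M}(v_m(t_m)-p_m)$ from equation \eqref{eq:rider} plays the role of the economic utility $u_e$, while the regulator's social surplus $P=\sum_{m=1}^{M}v_m(t_m)-\sum_{t\in A}c(t)$ from equation \eqref{eq:planner} plays the role of the social utility $u_s$. Writing $v(S\cup T)=g(u_e,u_s)$ then makes the coalition sustainable in the sense of Definition \ref{def:sustcoal}, discharging Axiom \ref{ax2}; the presence of the $M+1$ agents discharges Axiom \ref{ax1}, and the internal and external consistency of game theory discharges Axiom \ref{ax3}.

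Next I would establish non-emptiness of the core. Following equation \eqref{eq:covalue}, the value of the grand coalition decomposes as $v(S\cup T)=u(S)+u(T)$. I would argue that $u(S)\geq v(S)$, because the reduced tolls offered by the regulator let the riders share segment costs and therefore secure at least the utility they could obtain on their own; and that $u(T)>v(T)=0$, because a regulator that forms no coalition collects no tolls and realises zero social surplus, whereas the budget-balance condition guarantees a non-negative (and generically strictly positive) surplus once trips are assigned. Combining these inequalities with $v(S\cap T)=0$ yields the convexity inequality $v(S\cup T)\geq v(S)+v(T)-v(S\cap T)$, so the game is convex and its core is non-empty by \cite{shapley1971cores}. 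Having both sustainability and a non-empty core, the game satisfies Definition \ref{def:svgame} and is therefore a shared value game.

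The hard part will be justifying $u(S)\geq v(S)$ beyond a mere appeal to the toll incentive: one must show that the budget-balanced cost-sharing never leaves any sub-coalition of riders strictly better off outside the grand coalition, which is precisely the alignment between private and social incentives that the mechanism of \cite{ostrovsky2018carpooling} is engineered to enforce. I expect to lean on the socially-efficient-outcome guarantee of that mechanism to close this gap, so that the ``sweet-spot'' alignment of the economic and social objectives is what ultimately delivers convexity; the same guarantee also sidesteps the potential double-counting of $\sum_m v_m(t_m)$ across $u_e$ and $u_s$, since the efficiency of the assignment fixes the common trip valuations consistently for both agents.
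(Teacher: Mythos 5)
Your proposal follows essentially the same route as the paper's own proof: the same split of $\mathcal{A}$ into the rider subset $S$ and the regulator singleton $T$, the same two inequalities ($v(S)\leq U$ since roads and reduced tolls are unavailable without the regulator, and $v(T)=0$ since the social surplus vanishes without riders), and the same conclusion that $v(S\cup T)\geq v(S)+v(T)-v(S\cap T)$ delivers a non-empty core and hence a shared value game per Definition~\ref{def:svgame}. The concern you raise in your final paragraph --- that convexity formally requires the inequality for \emph{all} pairs of subcoalitions, not just this one partition --- is legitimate, but note that the paper's own proof does not address it either, so your attempt is, if anything, more candid about the gap than the published argument.
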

\begin{proof}
Assume that set $\mathcal{A}$ is formed by subset $S=\{a_1, a_2, \ldots, a_M\}$ and a subset $T=\{r\}$. Without the regulator, roads and reduced prices would not be possible resulting in $v(S) \leq U$. Without riders, social surplus $P=v(T)$ would be zero, hence preventing sustainibility. As a result, inequality $U + P \geq v(S) + v(T) - v(S \cap T)$ holds and the core is non-empty. Then, this model is a shared value game according to Definition \ref{def:svgame}.
\end{proof}


Thus, we conclude that the carpooling model presented by \cite{ostrovsky2018carpooling} is a is a SVE model. Note, however, that both utility $U$ and social surplus $P$ include the sum of riders' valuations. There is no trade-off between goals. The higher the utility, the higher the social surplus. Both agents have an incentive for cooperation, and we are dealing with the so-called ``sweet-spot" \citep{dembek2016literature}. To illustrate the common conflict between the objectives of economic agents, consider now that the regulator issues a license to a company that is interested in maximizing revenues $R$:
\begin{equation}
    R = \sum_{m=1}^M p_m.
    \label{eq:revenues}
\end{equation}

In this new context, SVE axioms hold too, but there is a trade-off between revenues and riders utility that can be computed by adding equations \eqref{eq:rider} and~\eqref{eq:revenues}:
\begin{equation}
    R = \sum_{m=1}^M v_m(t_m) - U.
\end{equation}

Due to this trade-off, it is necessary to look for a compromise solution. To this end, we first normalize utility and revenues, respectively, by means of indexes $\theta_1$ and $\theta_2$ following the same procedure that we use in equation \eqref{eq:theta2}. Then, we can depict normalized revenues $\theta_2$ in terms of utility $\theta_1$ as shown in Figure~\ref{fig:carpooling}. The line $\theta_2 = 1-\theta_1 $ represents the current (the actual-world) situation in which the revenues of the licensee are inversely proportional to the riders' utility. 

\begin{figure}[htb]
\centering
\includegraphics[width=0.8\textwidth]{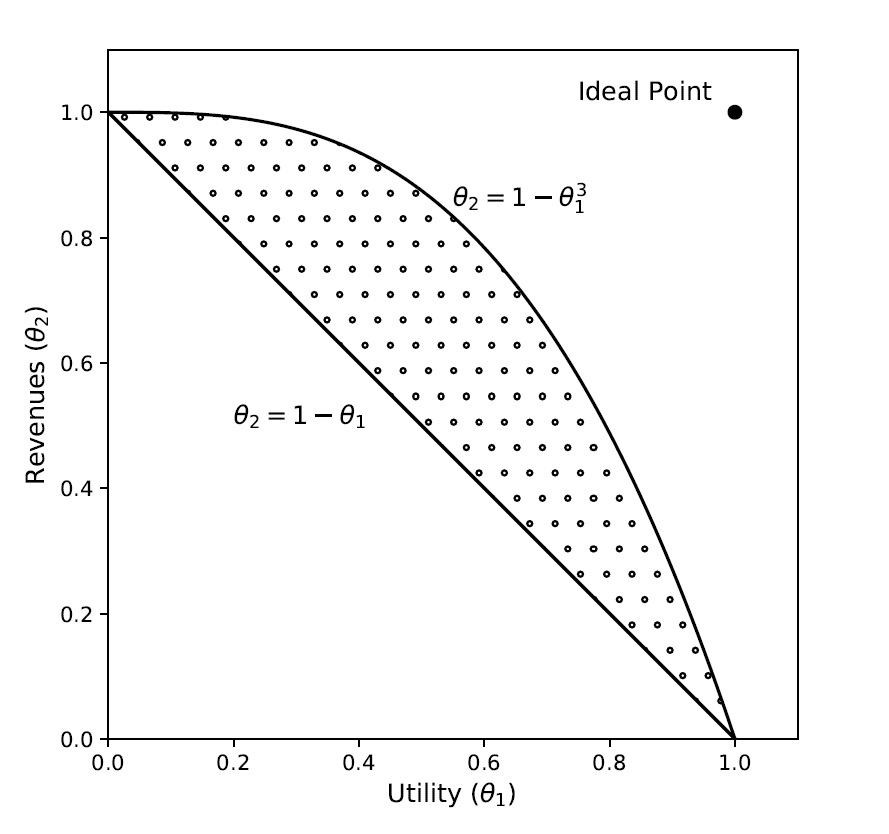}
\caption{\label{fig:carpooling}Creation of shared value in carpooling.}
\end{figure}

However, let us consider that the licensee is able to differentiate its carpooling service in such a way that riders' utility increases for the same price. This fact can be caused by a better matching algorithm, lower waiting times, better cars, or many other reasons. Let us further assume that the trade-off between utility and revenues shifts from $\theta_2 = 1-\theta_1$ to $\theta_2 = 1-\theta_1^3$ due to a better quality carpooling service. Each of the points of curve $\theta_2 = 1-\theta_1^3$ is a better solution than any of the points in the previous situation described by $\theta_2 = 1-\theta_1$. This displacement represents shared value creation as proposed in this paper. An additional interesting feature of our approach is that the opportunity of shared value creation can be estimated as the area comprised between these two curves.

\section{Conclusions\label{sec:conc}}

The notion of shared value raised doubts mainly about conceptualization, its beneficiaries and the way it is transformed into practical policies in modern economics. By means of an axiomatic approach, we show that shared value can be formally characterized as a first step to develop the concept and facilitate further research. By considering both the pioneering definition of shared value by Porter and Kramer and its criticisms, we propose a novel economic framework based on three axioms: coalition formation, sustainability and consistency. The presence of multiple agents that create value motivates the axiom of coalition formation to highlight the aspect that shared valued derives from a group formed by two or more agents. The axiom of sustainability ensures that both economic and social aspects are involved in shared value as its definition states. Finally, the axiom of consistency is a third key aspect to account for not only the theoretical principles of shared value but also for its practical implementation. In this sense, multiple criteria decision making methods represent a suitable approach to deal with shared value economics.

From the the axioms of coalition formation, sustainability and consistency, we derive the quantitative foundations of the theory of shared value. As a second result, we define shared value economics from a quantitative perspective as the coalition formation between multiple agents that make the best decisions under a long-term perspective of sustainability. Furthermore, we use utility functions to measure the outcomes of shared value and the game-theoretic concept of the core to formally define a shared value game and shared value creation. In order to obtain shared value policies in practice, multiple criteria decision making techniques provide a way to handle conflicting agents' goals as a surrogate for classical utility maximization approaches. A further advantage of considering multiple criteria decision making tools to deal with shared value economic problems is its ability to incorporate the concept of bounded rationality to cover a wider range of shared value agents.


Summarizing, our proposal contributes to consolidate the concept of shared value within modern economic theory. Our axiomatic approach builds on the foundations of shared value to improve its understanding and to facilitate formal reasoning and the suggestion of economic hypothesis. One of these hypotheses is the verification if an economic model can be classified as shared value economic model or not. As an illustration of such a formal reasoning, we show how a portfolio selection model and a carpooling model can be characterized by means of logical deductions. In this sense, determining how economic models can be adapted to fit the axioms of shared value economics is an interesting future line of work.

\bibliography{biblio}

\begin{thebibliography}{}

\bibitem[\protect\astroncite{Aakhus and Bzdak}{2012}]{aakhus2012revisiting}
Aakhus, M. and Bzdak, M. (2012).
\newblock Revisiting the role of “shared value” in the business-society
  relationship.
\newblock {\em Business and Professional Ethics Journal}, 31(2):231--246.

\bibitem[\protect\astroncite{Agatz et~al.}{2012}]{agatz2012optimization}
Agatz, N., Erera, A., Savelsbergh, M., and Wang, X. (2012).
\newblock Optimization for dynamic ride-sharing: A review.
\newblock {\em European Journal of Operational Research}, 223(2):295--303.

\bibitem[\protect\astroncite{Andre and Cardenete}{2009}]{andre2009defining}
Andre, F.~J. and Cardenete, M.~A. (2009).
\newblock Defining efficient policies in a general equilibrium model: a
  multi-objective approach.
\newblock {\em Socio-Economic Planning Sciences}, 43(3):192--200.

\bibitem[\protect\astroncite{Andr{\'e} et~al.}{2008}]{andre2008using}
Andr{\'e}, F.~J., Cardenete, M.~A., and Romero, C. (2008).
\newblock Using compromise programming for macroeconomic policy making in a
  general equilibrium framework: theory and application to the spanish economy.
\newblock {\em Journal of the Operational Research Society}, 59(7):875--883.

\bibitem[\protect\astroncite{Ballestero}{1998}]{ballestero1998approximating}
Ballestero, E. (1998).
\newblock Approximating the optimum portfolio for an investor with particular
  preferences.
\newblock {\em Journal of the Operational Research Society}, 49(9):998--1000.

\bibitem[\protect\astroncite{Ballestero}{2000}]{ballestero2000project}
Ballestero, E. (2000).
\newblock Project finance: A multi-criteria approach to arbitration.
\newblock {\em Journal of the Operational Research Society}, 51(2):183--197.

\bibitem[\protect\astroncite{Ballestero et~al.}{2012}]{ballestero2012socially}
Ballestero, E., Bravo, M., P{\'e}rez-Gladish, B., Arenas-Parra, M., and
  Pla-Santamaria, D. (2012).
\newblock Socially responsible investment: A multicriteria approach to
  portfolio selection combining ethical and financial objectives.
\newblock {\em European Journal of Operational Research}, 216(2):487--494.

\bibitem[\protect\astroncite{Ballestero and
  Garcia-Bernabeu}{2015}]{ballestero2015compromise}
Ballestero, E. and Garcia-Bernabeu, A. (2015).
\newblock Compromise programming and utility functions.
\newblock In {\em Socially Responsible Investment}, pages 155--175. Springer,
  Berlin.

\bibitem[\protect\astroncite{Ballestero et~al.}{2015}]{ballestero2015socially}
Ballestero, E., P{\'e}rez-Gladish, B., and Garcia-Bernabeu, A. (2015).
\newblock Socially responsible investment.
\newblock {\em A Multi-criteria Decision Making Approach. International Series
  in Operations Research \& Management Science}, 219.

\bibitem[\protect\astroncite{Ballestero and
  Pla-Santamaria}{2004}]{ballestero2004selecting}
Ballestero, E. and Pla-Santamaria, D. (2004).
\newblock Selecting portfolios for mutual funds.
\newblock {\em Omega}, 32(5):385--394.

\bibitem[\protect\astroncite{Ballestero and
  Romero}{1991}]{ballestero1991theorem}
Ballestero, E. and Romero, C. (1991).
\newblock A theorem connecting utility function optimization and compromise
  programming.
\newblock {\em Operations Research Letters}, 10(7):421--427.

\bibitem[\protect\astroncite{Ballestero and
  Romero}{1993}]{ballestero1993weighting}
Ballestero, E. and Romero, C. (1993).
\newblock Weighting in compromise programming: A theorem on shadow prices.
\newblock {\em Operations Research Letters}, 13(5):325--329.

\bibitem[\protect\astroncite{Ballestero and
  Romero}{1994}]{ballestero1994utility}
Ballestero, E. and Romero, C. (1994).
\newblock Utility optimization when the utility function is virtually unknown.
\newblock {\em Theory and Decision}, 37(2):233--243.

\bibitem[\protect\astroncite{Ballestero and
  Romero}{1998}]{ballestero1998multiple}
Ballestero, E. and Romero, C. (1998).
\newblock {\em Multiple criteria decision making and its applications to
  economic problems}.
\newblock Springer Science \& Business Media, New York.

\bibitem[\protect\astroncite{Bistaffa et~al.}{2017}]{bistaffa2017cooperative}
Bistaffa, F., Farinelli, A., Chalkiadakis, G., and Ramchurn, S.~D. (2017).
\newblock A cooperative game-theoretic approach to the social ridesharing
  problem.
\newblock {\em Artificial Intelligence}, 246:86--117.

\bibitem[\protect\astroncite{Branke et~al.}{2008}]{branke2008multiobjective}
Branke, J., Deb, K., Miettinen, K., and Slowi{\'n}ski, R. (2008).
\newblock {\em Multiobjective optimization: Interactive and evolutionary
  approaches}.
\newblock Springer Science \& Business Media, Berlin.

\bibitem[\protect\astroncite{Carter and Rogers}{2008}]{carter2008framework}
Carter, C.~R. and Rogers, D.~S. (2008).
\newblock A framework of sustainable supply chain management: moving toward new
  theory.
\newblock {\em International journal of physical distribution \& logistics
  management}, 38(5):360--387.

\bibitem[\protect\astroncite{Cerquides et~al.}{2013}]{cerquides2013tutorial}
Cerquides, J., Farinelli, A., Meseguer, P., and Ramchurn, S.~D. (2013).
\newblock A tutorial on optimization for multi-agent systems.
\newblock {\em The Computer Journal}, 57(6):799--824.

\bibitem[\protect\astroncite{Charnes and Cooper}{1977}]{charnes1977goal}
Charnes, A. and Cooper, W.~W. (1977).
\newblock Goal programming and multiple objective optimizations.
\newblock {\em European Journal of Operational Research}, 1(1):39--54.

\bibitem[\protect\astroncite{Crane et~al.}{2014}]{crane2014contesting}
Crane, A., Palazzo, G., Spence, L.~J., and Matten, D. (2014).
\newblock Contesting the value of “creating shared value”.
\newblock {\em California Management Review}, 56(2):130--153.

\bibitem[\protect\astroncite{Demange}{2004}]{demange2004group}
Demange, G. (2004).
\newblock On group stability in hierarchies and networks.
\newblock {\em Journal of Political Economy}, 112(4):754--778.

\bibitem[\protect\astroncite{Dembek et~al.}{2016}]{dembek2016literature}
Dembek, K., Singh, P., and Bhakoo, V. (2016).
\newblock Literature review of shared value: a theoretical concept or a
  management buzzword?
\newblock {\em Journal of Business Ethics}, 137(2):231--267.

\bibitem[\protect\astroncite{Elkington}{1994}]{elkington1994towards}
Elkington, J. (1994).
\newblock Towards the sustainable corporation: Win-win-win business strategies
  for sustainable development.
\newblock {\em California Management Review}, 36(2):90--100.

\bibitem[\protect\astroncite{Elkington}{1998}]{elkington1998partnerships}
Elkington, J. (1998).
\newblock Partnerships from cannibals with forks: The triple bottom line of
  21st-century business.
\newblock {\em Environmental Quality Management}, 8(1):37--51.

\bibitem[\protect\astroncite{Elkington}{2004}]{elkington2004enter}
Elkington, J. (2004).
\newblock Enter the triple bottom line.
\newblock In {\em The triple bottom line: does it all add up?}, pages 1--16.
  Routledge, London.

\bibitem[\protect\astroncite{Emerson}{2003}]{emerson2003blended}
Emerson, J. (2003).
\newblock The blended value proposition: Integrating social and financial
  returns.
\newblock {\em California Management Review}, 45(4):35--51.

\bibitem[\protect\astroncite{Garc{\'\i}a-Barrag{\'a}n
  et~al.}{2019}]{garcia2019defining}
Garc{\'\i}a-Barrag{\'a}n, J.~F., Eyckmans, J., and Rousseau, S. (2019).
\newblock Defining and measuring the circular economy: A mathematical approach.
\newblock {\em Ecological Economics}, 157:369--372.

\bibitem[\protect\astroncite{Griggs et~al.}{2013}]{griggs2013policy}
Griggs, D., Stafford-Smith, M., Gaffney, O., Rockstr{\"o}m, J., {\"O}hman,
  M.~C., Shyamsundar, P., Steffen, W., Glaser, G., Kanie, N., and Noble, I.
  (2013).
\newblock Policy: Sustainable development goals for people and planet.
\newblock {\em Nature}, 495(7441):305.

\bibitem[\protect\astroncite{Jones et~al.}{2010}]{jones2010practical}
Jones, D., Tamiz, M., et~al. (2010).
\newblock {\em Practical goal programming}, volume 141.
\newblock Springer, New York.

\bibitem[\protect\astroncite{Korhonen et~al.}{2018}]{korhonen2018circular}
Korhonen, J., Honkasalo, A., and Sepp{\"a}l{\"a}, J. (2018).
\newblock Circular economy: the concept and its limitations.
\newblock {\em Ecological economics}, 143:37--46.

\bibitem[\protect\astroncite{Leyton-Brown and
  Shoham}{2008}]{leyton2008essentials}
Leyton-Brown, K. and Shoham, Y. (2008).
\newblock {\em Essentials of game theory: A concise multidisciplinary
  introduction}.
\newblock Morgan \& Claypool Publishers, San Rafael, California.

\bibitem[\protect\astroncite{Lucas and Sargent}{1981}]{lucas1981rational}
Lucas, R.~E. and Sargent, T.~J. (1981).
\newblock {\em Rational expectations and econometric practice}, volume~2.
\newblock University of Minnesota Press.

\bibitem[\protect\astroncite{MacCrimmon}{1973}]{maccrimmon1973overview}
MacCrimmon, K.~R. (1973).
\newblock An overview of multiple objective decision making.
\newblock In {\em Multiple Criteria Decision Making}, pages 18--43. University
  of South Carolina Press, Columbia.

\bibitem[\protect\astroncite{Marimon et~al.}{1990}]{marimon1990money}
Marimon, R., McGrattan, E., and Sargent, T.~J. (1990).
\newblock Money as a medium of exchange in an economy with artificially
  intelligent agents.
\newblock {\em Journal of Economic Dynamics and Control}, 14(2):329--373.

\bibitem[\protect\astroncite{Markowitz}{1952}]{markowitz1952portfolio}
Markowitz, H. (1952).
\newblock Portfolio selection.
\newblock {\em The Journal of Finance}, 7(1):77--91.

\bibitem[\protect\astroncite{Mas-Colell et~al.}{1995}]{mas1995microeconomic}
Mas-Colell, A., Whinston, M.~D., Green, J.~R., et~al. (1995).
\newblock {\em Microeconomic theory}.
\newblock Oxford University Press, New York.

\bibitem[\protect\astroncite{Muth}{1961}]{muth1961rational}
Muth, J.~F. (1961).
\newblock Rational expectations and the theory of price movements.
\newblock {\em Econometrica: Journal of the Econometric Society}, pages
  315--335.

\bibitem[\protect\astroncite{Myerson}{1977}]{myerson1977graphs}
Myerson, R.~B. (1977).
\newblock Graphs and cooperation in games.
\newblock {\em Mathematics of Operations Research}, 2(3):225--229.

\bibitem[\protect\astroncite{Okun}{2015}]{okun2015equality}
Okun, A.~M. (2015).
\newblock {\em Equality and efficiency: The big tradeoff}.
\newblock Brookings Institution Press.

\bibitem[\protect\astroncite{Ostrovsky and
  Schwarz}{2018}]{ostrovsky2018carpooling}
Ostrovsky, M. and Schwarz, M. (2018).
\newblock Carpooling and the economics of self-driving cars.
\newblock Technical report, National Bureau of Economic Research.

\bibitem[\protect\astroncite{Popper}{1959}]{popper1959logic}
Popper, K. (1959).
\newblock {\em The logic of scientific discovery}.
\newblock Hutchinson, London.

\bibitem[\protect\astroncite{Porter and Kramer}{2006}]{porter2006strategy}
Porter, M. and Kramer, M. (2006).
\newblock Strategy and society: the link between corporate social
  responsibility and competitive advantage.
\newblock {\em Harvard Business Review}, 84(12):78--92.

\bibitem[\protect\astroncite{Porter and Kramer}{2011}]{porter2011big}
Porter, M.~E. and Kramer, M.~R. (2011).
\newblock The big idea: Creating shared value. how to reinvent capitalism—and
  unleash a wave of innovation and growth.
\newblock {\em Harvard Business Review}, 89(1-2).

\bibitem[\protect\astroncite{Porter and Kramer}{2014}]{porter2014response}
Porter, M.~E. and Kramer, M.~R. (2014).
\newblock A response to {A}ndrew {C}rane et al.’s article.
\newblock {\em California Management Review}, 56(2):149--151.

\bibitem[\protect\astroncite{Romero}{1991}]{romero1991handbook}
Romero, C. (1991).
\newblock {\em Handbook of critical issues in goal programming}.
\newblock Pergamon Press, Oxford.

\bibitem[\protect\astroncite{Sachs}{2012}]{sachs2012millennium}
Sachs, J.~D. (2012).
\newblock From millennium development goals to sustainable development goals.
\newblock {\em The Lancet}, 379(9832):2206--2211.

\bibitem[\protect\astroncite{Salas-Molina
  et~al.}{2017}]{salas2017characterizing}
Salas-Molina, F., Rodríguez-Aguilar, J.~A., and Pla-Santamaria, D. (2017).
\newblock Characterizing compromise solutions for investors with uncertain risk
  preferences.
\newblock {\em Operational Research}, pages 1--17.

\bibitem[\protect\astroncite{Sandholm et~al.}{1999}]{sandholm1999coalition}
Sandholm, T., Larson, K., Andersson, M., Shehory, O., and Tohm{\'e}, F. (1999).
\newblock Coalition structure generation with worst case guarantees.
\newblock {\em Artificial Intelligence}, 111(1-2):209--238.

\bibitem[\protect\astroncite{Sargent and Wallace}{1975}]{sargent1975rational}
Sargent, T.~J. and Wallace, N. (1975).
\newblock Rational expectations, the optimal monetary instrument, and the
  optimal money supply rule.
\newblock {\em Journal of Political Economy}, 83(2):241--254.

\bibitem[\protect\astroncite{Sargent and Wallace}{1976}]{sargent1976rational}
Sargent, T.~J. and Wallace, N. (1976).
\newblock Rational expectations and the theory of economic policy.
\newblock {\em Journal of Monetary economics}, 2(2):169--183.

\bibitem[\protect\astroncite{Sent}{1997}]{sent1997sargent}
Sent, E.-M. (1997).
\newblock Sargent versus {S}imon: bounded rationality unbound.
\newblock {\em Cambridge Journal of Economics}, 21(3):323--338.

\bibitem[\protect\astroncite{Shapley}{1971}]{shapley1971cores}
Shapley, L.~S. (1971).
\newblock Cores of convex games.
\newblock {\em International Journal of Game Theory}, 1(1):11--26.

\bibitem[\protect\astroncite{Shehory and Kraus}{1995}]{shehory1995task}
Shehory, O. and Kraus, S. (1995).
\newblock Task allocation via coalition formation among autonomous agents.
\newblock In {\em International Joint Conference on Artificial Intelligence},
  pages 655--661.

\bibitem[\protect\astroncite{Simon}{1955}]{simon1955models}
Simon, H.~A. (1955).
\newblock {\em Models of man; social and rational.}
\newblock John Wiley \& Sons, New York.

\bibitem[\protect\astroncite{Simon}{1979}]{simon1979rational}
Simon, H.~A. (1979).
\newblock Rational decision making in business organizations.
\newblock {\em The American Economic Review}, 69(4):493--513.

\bibitem[\protect\astroncite{Tamiz et~al.}{1998}]{tamiz1998goal}
Tamiz, M., Jones, D., and Romero, C. (1998).
\newblock Goal programming for decision making: An overview of the current
  state-of-the-art.
\newblock {\em European Journal of Operational Research}, 111(3):569--581.

\bibitem[\protect\astroncite{{United Nations}}{2014}]{un2014sdg}
{United Nations} (2014).
\newblock Un general assembly’s open working group proposes sustainable
  development goals.

\bibitem[\protect\astroncite{Voice et~al.}{2012}]{voice2012coalition}
Voice, T., Polukarov, M., and Jennings, N.~R. (2012).
\newblock Coalition structure generation over graphs.
\newblock {\em Journal of Artificial Intelligence Research}, 45:165--196.

\bibitem[\protect\astroncite{Von~Neumann and Morgenstern}{1953}]{von1953theory}
Von~Neumann, J. and Morgenstern, O. (1953).
\newblock {\em Theory of games and economic behavior}.
\newblock Princeton university press, New Jersey.

\bibitem[\protect\astroncite{Yu}{1973}]{yu1973class}
Yu, P.-L. (1973).
\newblock A class of solutions for group decision problems.
\newblock {\em Management Science}, 19(8):936--946.

\bibitem[\protect\astroncite{Yu}{1985}]{yu1985multiple}
Yu, P.-L. (1985).
\newblock {\em Multiple-criteria decision making: concepts, techniques, and
  extensions}.
\newblock Plenum Press, New York.

\bibitem[\protect\astroncite{Zeleny}{1973}]{zeleny1973compromise}
Zeleny, M. (1973).
\newblock Compromise programming.
\newblock In {\em Multiple Criteria Decision Making}, pages 262--301.
  University of South Carolina Press, Columbia.

\bibitem[\protect\astroncite{Zeleny}{1982}]{zeleny1982multiple}
Zeleny, M. (1982).
\newblock {\em Multiple criteria decision making}.
\newblock McGraw-Hill, New York.

\bibitem[\protect\astroncite{Zopounidis and
  Doumpos}{2016}]{zopounidis2016multiple}
Zopounidis, C. and Doumpos, M. (2016).
\newblock {\em Multiple criteria decision making: applications in management
  and engineering}.
\newblock Springer, Berlin.

\end{thebibliography}

\end{document}